\newtheorem{theorem}{Theorem}[section]
\newtheorem{lemma}[theorem]{Lemma}
\newtheorem*{problem}{Problem}
\theoremstyle{definition}
\newtheorem{definition}[theorem]{Definition}
\newtheorem{remark}{Remark}
\def\be{\begin{equation}}
\def\ee{\end{equation}}
\def\bee{\begin{equation*}}
\def\eee{\end{equation*}}
\def\bea{\begin{eqnarray}}
\def\eea{\end{eqnarray}}
\def\p{\partial}
\newcommand{\R}{{\mathbb R}}
\newcommand{\E}{{\mathcal E}}
\title[Dynamical tomography]
{Dynamical tomography of gravitationally bound systems}
\author[Mikko Kaasalainen]{}
\subjclass{35Q72, 37J(35,40), 49N45, 65C05, 70[F(10,17), H(06,08,33), K43], 85A05}
\keywords{Inverse problems, Mathematical physics, Dynamical systems, Hamiltonian systems, 
Quasi-periodic motions and invariant tori, $n$-body problems, Galactic and stellar dynamics}
\thanks{Supported by the Academy of Finland, project "New mathematical methods in planetary
and galactic research"}
\begin{document}
\maketitle

\centerline{\scshape Mikko Kaasalainen}
\medskip
{\footnotesize
 \centerline{Department of Mathematics and Statistics}
   \centerline{P.O. Box 68, FI-00014 University of Helsinki}
   \centerline{Finland}
} 

\bigskip

 \centerline{(Communicated by Jari Kaipio)}

\begin{abstract}
We study the inverse problem of deducing the dynamical characteristics 
(such as the potential field) of large systems from kinematic observations.
We show that, for a class of steady-state systems, 
the solution is unique even with fragmentary data, dark matter, 
or selection (bias) functions. Using spherically symmetric
models for simulations, we investigate solution convergence and 
the roles of data noise and 
regularization in the inverse problem. We also present a method,
analogous to tomography,
for comparing the observed data with a model probability 
distribution such that the latter can be determined.
\end{abstract}

\section{Introduction}

Matter in the universe is usually contained in
systems bound together by gravitation: planets and their moons,
planetary systems, star clusters, galaxies, and groups of galaxies. Indeed, the
modern concepts of gravitation and gravitational potential were brought
about by the realization that a mathematically well definable universal
force field must keep celestial bodies in their observed orbits. Newton's
solution to the inverse problem of ``What kind of a force keeps two 
pointlike bodies in an elliptic orbit around each other?'' was the 
inverse-square law of 
attraction\footnote{Newton actually solved
the direct problem of ``What kind of an orbit is produced by the inverse-square
attraction?'' (this is what Halley asked him). The solution of the inverse
problem is an instant corollary; it is also unique for motion around the 
focal point of an ellipse -- the harmonic oscillator creates an elliptic
orbit around the centre.}. 
In such a force field, the equations of motion for a body can be written as
\bee
\frac{d^2 x}{dt^2}=-\nabla \Phi(x),\label{newton}
\eee
where $t\in\R$ is the time, $x\in \R^3$ describes the position of the body,
and $\Phi:\R^3\rightarrow \R$ is the gravitational potential; here
$\Phi(x)\sim \Vert x\Vert^{-1}$.

Outside isolated two-body configurations, 
the two-body solution only serves as a useful first 
approximation for systems dominated by one massive body; 
for such a class of systems,
the perturbations from this solution can be examined analytically
to a relatively high precision. The behaviour of a system consisting of 
a moderate number of bodies of more or less equal weight must be numerically 
integrated assuming the Newtonian potential for each pointlike mass. However,
if the system consists of many ($>10^6$ or so) gravitationally interacting 
bodies collectively forming the potential, we can
assume it to obey some principles of statistical mechanics. Accordingly,
we no longer analyze the motion of separate particles, but want
to define the dynamical characteristics of the system as a whole, such as its
global-scale potential field  $\Phi(x)$, matter density $\rho(x)$, 
etc. We define {\em dynamical tomography} as the inverse problem of deducing these smoothly distributed characteristics from kinematic observations. A more detailed definition of this large-scale generalization of Newton's
dynamical inverse problem is given in Section 2.1. The term tomography can be used here in both
intuitive and technical senses: we determine density/distribution functions inside a domain, and
this can be done by employing the principles of tomographic analysis.

Large gravitationally bound systems have been studied for more
than a century, and the corresponding theoretical framework is described 
in several textbooks, of which \cite{bt}
is the most up-to-date one. However, the studies have mostly concentrated 
on the direct problem of creating numerically or analytically well
tractable dynamical models, or of
designing approximate system models to
explain various observed phenomena, while the
general inverse problem (as posed in sections 2 and 3 here)
has received less attention. This is mostly due
to the lack of data sufficient for a comprehensive analysis, and the 
scarcity of efficient methods of dealing with such data. New sky surveys
such as LSST (Large Synoptic Survey Telescope, USA) and Gaia (EU), both
starting at the beginning of the next decade, 
will provide a vast amount of kinematic data of the stars in our galaxy, making the
construction of a robust machinery for inverse problem analysis essential.
We describe here some theoretical and practical aspects of solving problems 
in dynamical tomography.

The structure of the paper is as follows: in Section 2, we review as well
as define some
basic concepts of large gravitational systems, and pose the problem in 
corresponding terminology. 
In Section 3, we examine the fundamental aspects and uniqueness
properties of the inverse problem, and in Section 4 present examples
in the simplified case of spherically symmetric systems. 
In Section 5, we redefine the problem in
the sense of probability distributions, which is usually necessary in practice.
Finally, in Section 6, we present conclusions and discuss future work
and applications.

\section{Defining large systems}

\subsection{Distribution functions}

In this paper, we assume the system to be bound, stable, collisionless 
($d\gg D$, where $d$ is the shortest distance between two bodies of size $D$)
and in a steady-state equilibrium (for a discussion on these standard 
assumptions and their applicability regimes, 
see \cite{bt}). In practice, this is a good 
approximation for large stellar systems, much in the same way as 
the two-body solution is a good starting point for analyzing the solar system.
The dynamics of such a system is completely described by
the smooth phase-space distribution function $f(x,v,t)$ of its constituent
particles (stars), $f:\R_x^3\times \R_v^3
\times \R_t\rightarrow \R$,
where $v\in\R^3$ is the velocity. The function $f$ pertains to the limits 
$d\rightarrow 0$ (infinitely many particles), $D\rightarrow 0$, and 
$D/d\rightarrow 0$. In practice, $f$ is interpreted as the quantity
giving the expected number of particles in a phase-space volume via
$f\,d^3x\,d^3v$, or as the probability density
of finding a particle at $(x,v)$.

The distribution function $f$ essentially 
corresponds to fluid in six dimensions; the
equation of motion for $f$ is the collisionless Boltzmann (or Vlasov) equation
\be
\frac{\p f}{\p t}+\langle v,\nabla_x f\rangle-\langle\nabla\Phi,\nabla_v f\rangle=0\label{bol}
\ee
(a special case of Liouville's theorem; see \cite{bt}). In terms of Poisson
brackets, this reads $\p f/\p t=[H,f]$, where $H$ is the Hamiltonian.
This implies that the flow
of the fluid (the smooth motion of particle phase points through phase space)
is incompressible. In this paper, we study the steady-state case where
matter units move in their orbits in the potential field $\Phi$ while 
leaving the collective system configuration completely unchanged:
$\p f/\p t=0=\p\Phi/\p t$, i.e., $f=f(x,v)$ and $\Phi=\Phi(x)$. Moreover, the
observational data, measurements of $(x,v)$ for stars or equivalent units
of matter, are effectively for one epoch: though they are
measured over several years, the orbital periods are millions of years,
so the data do not provide orbital information outside the
linearized regime $v=d x/d t$.

We also define a set 
$Q$ including all the other characteristics than $x,v$ that can be assigned
to a star from observations, such as luminosity, temperature, 
chemical composition, etc. These may be direct observables or implicit quantities 
such as mass and age that can be expressed in terms of direct ones.
They can be continuous or even discrete (taxonomy), in which case an
integral over such a variable is understood in a suitable corresponding sense.
Thus we have a full distribution function $F(x,v;Q)$ of $N+6$ dimensions,
where $N$ is the number of the quantities in the set $Q$. At first sight, one
might imagine that only mass could be a dynamically interesting quantity
in $Q$, but, e.g., luminosity directly affects the observability and bias
factors. Also, as will be shown below, the possibility to partition 
the observed
particles into different populations with the aid of $Q$
adds to the information content.

We can distribute $f(x,v)$ further among the members of $Q$ at each $(x,v)$
with a function $\varphi(x,v;Q)$:

\begin{definition} The full distribution function of number
density in $\R_x^3\times \R_v^3$ and in the set $Q$ of $N$ quantities $q$
(generic notation for a member of $Q$) is
\be
F(x,v;Q)=f(x,v)\varphi(x,v;Q),
\ee
normalized such that $\int_Q \varphi(x,v;Q)\, d^N q=1$.
The marginal distribution of some $q\in Q$, $q_0\le q\le q_1$ is
\be
F(x,v;q)=f(x,v)\varphi(x,v;q)=
f(x,v)\int_{Q\setminus q} \varphi (x,v;Q)\, d^{N-1}q'
\ee
and $\int_{q_0}^{q_1} \varphi (x,v;q)\, dq=1.$
\end{definition}

\begin{definition} The distribution function $f_i(x,v)$ of
an object population $P_i$ is given by
\be
f_i(x,v)=f(x,v)\mathcal{P}_i(x,v),
\ee
where the population filter $\mathcal{P}_i(x,v)$ is
\be
\mathcal{P}_i(x,v)=\int_{\Lambda_i} \varphi(x,v;Q)\, d^N q,
\ee 
where $\Lambda_i$ sets the (integration) limits for each observable of
 $Q$ according to the given characteristics of $P_i$.
 \end{definition}

An important distribution function is that of mass per $d^3x\,d^3v$, 
denoted by $\tilde f(x,v)$:

\begin{definition} The mass distribution function is given by
\be
\tilde f(x,v)=f(x,v)\mathcal{M}(x,v),
\ee
where the mass factor $\mathcal{M}(x,v)$ is
\be
\mathcal{M}(x,v)=\int_0^\infty M \varphi (x,v;M)\, dM.
\ee
\end{definition}

For example, the case of all particles having the same mass $M_0$ is given
by
\bee
\varphi(x,v;M)=\delta(M-M_0),
\eee
where $\delta$ is the Dirac delta distribution. Thus $\mathcal{M}=M_0$.

The amount of matter in an infinitesimal volume of phase space 
$\R_x^3\times \R_v^3$ is
$\tilde f\, d^3x\,d^3v$, and the matter density $\rho(x)$ in $\R_x^3$ is
\be
\rho(x)=\int_{V_+} \tilde f(x,v) \,d^3v,\label{rho}
\ee
where the velocity domain $V_+$ includes all velocities $v$ that can exist
at $x$ in a bound system; alternatively, for $v\notin V_+$ we define $f(x,v)=0$
and can integrate over all $v$. The matter density and the potential are
related through Poisson's equation:
\be
\rho(x)=\frac{1}{4\pi G}\nabla^2\Phi(x),\label{poisson}
\ee
where $G$ is the universal gravitation constant. 
Note that if $\tilde f$ is the 
distribution of all matter, equations (\ref{bol}), (\ref{rho}) and 
(\ref{poisson}) must be simultaneously fulfilled. The search for such
self-consistent solution pairs of $\tilde f,\Phi$ is the fundamental problem 
of stellar dynamics. However, as we will show in Section 3, 
$f$ (or $\tilde f$) does not need to cover all matter in
our problem; it can describe a selected population, or there may be 
unobservable matter such that the full $f$ cannot be determined while 
$\Phi$ can.

The distribution function $f$ is a statistical tool, meant to
describe the average distribution of matter in large enough bins. As 
mentioned above, another
useful practical interpretation of $f$ is the probability density
of observing a star (or, for $F(x,v;Q)$, a star with properties Q) 
at $(x,v)$; this circumvents the local-scale distribution of matter 
in space, and the size of our averaging bins -- the collisionless 
approximation means (and observations show) that most of space is void of 
luminous matter. In section 5, we will use this interpretation to redefine
the observables of our inverse problem.

We can now state our problem of dynamical tomography as follows:

\begin{problem}
Given a large number of observed $(x,v)$ (for any motion markers
such as stars or other matter and possibly in different populations
$P_i$) in a domain $\Omega\subset \R^3\times \R^3$
in a gravitationally bound steady-state system, deduce the
potential $\Phi(x)$ of the system, and the distribution function(s) $f(x,v)$
of the observed matter in $\Omega$.
\end{problem}

Obviously, if the number of observations is
extremely large and all matter in the system is 
observed, the problem is trivial. The number of observations in arbitrarily
small phase-space volumes is now nonzero, so an accurate estimate of
the distribution $f(x,v)$ is obtained
by simply counting the objects, assuming that their masses are known. From
this, $\Phi(x)$ follows; one could, as a shortcut to $\Phi(x)$, directly
count objects in $\R_x^3$ only. 
What makes the problem an inverse one is that
neither of the requisites is fulfilled in reality: the number of 
observations, though high, is not sufficient for direct counting in small
enough phase-space (or even $\R_x^3$) bins, and, above all, not all 
of the system matter is observed because of bias factors or non-luminous
matter. This is why we need to consider 
$\R_x^3\times\R_v^3$ instead of just $\R_x^3$ even if we only want to
obtain $\Phi(x)$: the fraction of unobserved matter can only be inferred by
using a priori information furnished by the expected dynamical properties
of the system. 
Our goal is now to find a parametrization in which the dynamics
and a priori constraints of the system can be expressed in a practical 
and preferably analytically tractable way.
Here we neglect effects such as spatial
correlation between stars, 
changing mass (shedding or accretion) due to stellar evolution, etc. 
\cite{bt}. 

Some basic properties of the system can already be deduced from a very 
small number of observations. For example, if we observe matter
at distance $r$ from the centre of the system, moving at speed $v_r$ relative
to us,
and assume it to be bound by the system, we can roughly estimate the
mass $M$ inside $r$, e.g., by assuming a spherical mass distribution
and a circular orbit around the centre: $M=v_r^2r/G$.
In this and similar ways it has been deduced from several observations of 
galaxies and their clusters
that most of the mass in the universe is contained in dark matter:
non-luminous and extremely hard if not impossible to observe directly.

\subsection{Jeans theorems and integrable systems}

As is well known, a natural and compact way of parametrizing the 
desired dynamical properties of the system can be achieved via
its dynamical invariants.
A function $I(x,v)$, evaluated at $[x(t),v(t)]$ of
any orbit in the potential $\Phi$ of the system, is an {\em integral of motion} if
\bee
\frac{d}{d t} I[x(t),v(t)]=0
\eee 
for all orbits at all $t$. Evaluating the time derivative by the chain rule
and substituting the equation of motion (\ref{newton}), we obtain 
an equation exactly of the form (\ref{bol}) ($I$ corresponding to $f$)
with the steady-state condition $\p f/\p t=0$.
Thus we arrive at the Jeans theorem (see, e.g., \cite{bt}) 
that states that any steady-state
solution $f$ of (\ref{bol}) is of the form $f[I_i(x,v)]$, where $I_i$
are integrals of motion.
For example, the energy $E(x,v)$:
\be
E(x,v)=\frac{1}{2}\langle v,v\rangle+\Phi(x) \label{ene}
\ee
is always an integral in gravitationally interacting systems.

To have a complete set of integrals which we can, in principle, 
reconstruct and thus properly use in our formulation of the problem, we 
further restrict ourselves to regular orbits and integrable systems (defined below), 
a subset of the steady-state solutions of (\ref{bol}). 
This class of solutions, though restricted, has two great advantages:
\begin{enumerate}[i)]

\item it facilitates an analytical or semianalytical investigation of the 
problem as well as the derivation of uniqueness results; and

\item it can be readily viewed as a further modifiable approximate
representation of the full class of solutions.

\end{enumerate}

An orbit is regular (quasiperiodic) if it is confined
to a 3-torus $S^1\times S^1\times S^1$ in phase space \cite{arn}. Then
it has three integrals (also called isolating integrals)
$I_i, i=1,2,3$ that can be used to define the
torus: $I_i(x,v)=C_i$, where $C_i$ are constants. The isolating integrals
can be given in an arbitrary basis as functions of $I_i$ are isolating
integrals as well. A proper basis set of isolating integrals
maps each torus $\mathcal T$ as ${\mathcal T}\leftrightarrow I_i, i=1,2,3$, 
i.e., any such bases
can be mapped one-to-one into each other: $\lbrace I'_i\rbrace\leftrightarrow 
\lbrace I_i \rbrace$.

If a potential $\Phi(x), x\in \R^3$ creates an integrable system, all orbits 
are confined to 3-tori in $\R_x^3\times \R_v^3$
that are each defined by three action (Poincar\'e)
integrals $J_i, i=1,2,3$, a class of isolating integrals $I$:
\be
J_i=\frac{1}{2\pi}\oint_{\mathfrak{P}_i} \langle p, dq\rangle,\label{action}
\ee
where $p\in \R^3$ and $q\in \R^3$ are any canonically conjugate
momenta and coordinates, and 
$\mathfrak{P}_i$ is a path that cannot be continuously deformed into a point.
For other paths, the integral vanishes. There are three such sets of 
possible paths each producing one $J_i$; see, e.g., \cite{arn,bt}.
$J\in \R_J^3$ plays the role of canonical momentum in Hamilton's equations,
and its canonically conjugate coordinate pair is the angle variable 
$\theta\in \R^3$. $(J,\theta)$ are related to phase-space coordinates
$(x,v)$ by a canonical transformation, and energy (Hamiltonian) depends
on $J$ only: $E=E(J)$.

Action-angle formalism yields the time averages theorem \cite{bt} 
that states that the density of the orbit on the torus, i.e., the average
time it spends on different parts 
of the torus, is evenly distributed in $\theta$ 
on its surface. This holds for non-resonant, i.e., not closed 
orbits that make up almost all of phase space. 
This, in turn, leads to the strong Jeans theorem
that essentially states that, in an integrable system,
a steady-state solution of (\ref{bol}) must be of the form 
\be
f(x,v)=f[I_1(x,v),\dots ,I_3(x,v)]:= f[I(x,v)], \label{jeans}
\ee
where $I_i$ are isolating integrals. 

If we want a steady-state solution for the luminous matter in the system,
the distribution function of our choice is the mass distribution
$\tilde f(x,v)$, i.e., we now assume that $\tilde f=\tilde f[I(x,v)]$.
Thus, if we want $f$ and $\tilde f$ to be interchangeable in our steady-state
equations and theorems to be derived in section 3, 
we must assume the mass factor $\mathcal M$ to be of the form 
$\mathcal{M}=\mathcal{M}[I(x,v)]$ as well, i.e.,
$\varphi(x,v;M)=\varphi[I(x,v);M]$. Leaving the masses unknown is the 
advantage in using $f$, but in practice one should solve the inverse
problem with $\tilde f$ as well by weighting the count of each observed point 
$(x_i,v_i)$ by the corresponding mass $M_i$, and then compare the results
(e.g., the tori obtained).

The fact that the system as a whole is a steady-state and integrable one
does not generally imply 
that a population filter $\mathcal{P}_i(x,v)$ is of the
form $\mathcal{P}_i[I(x,v)]$, or that, for any other $q$ than the mass $M$,
$\varphi(x,v;q)=\varphi[I(x,v);q]$. But if such filters (i.e., steady-state
populations) can be assumed to be identifiable
at least in some domain of $Q$ and $\R_x^3\times\R_v^3$, they
are very useful in the inverse problem, as will be shown in section 3.

\section{Inverse problem}

\subsection{Uniqueness}

By virtue of the strong Jeans theorem,
we now seek integrable systems that reproduce the observations. Since the
orbits of all particles, observed or not, are on tori, and $f(x,v)$
is of the form (\ref{jeans}), we know that each isosurface
$f(x,v)=const$ in phase space $\R_x^3\times \R_v^3$
of any steady-state $f(x,v)$, regardless of the group of objects it 
represents, entirely consists of 3-tori on which $I(x,v)=const$.
To show that these isosurfaces determine the potential $\Phi(x)$ of the
system, let us first present the following lemma:

\begin{lemma} Let the potential $\Phi(x)$ generate an integrable system,
and let $\mathcal T$ denote the corresponding set of 3-tori in 
$\R^3\times \R^3$.
Then $\Phi(x)$ is the only integrable potential (up to an additive constant) 
that creates any chosen subset $\widehat{\mathcal T}$
of arbitrarily small patches $\Gamma$ on any tori of $\mathcal T$ such that 
$\widehat{\mathcal T}$ covers
all of $\R_x^3$ (accessible to the system) in a connected manner.
\end{lemma}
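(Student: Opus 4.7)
The plan is to compare the given potential $\Phi$ with any competitor $\Phi'$ via energy conservation on the shared torus patches. Suppose $\Phi'$ is another integrable potential whose torus family $\mathcal{T}'$ also contains every $\Gamma\in\widehat{\mathcal{T}}$ as a subset of some torus. Since an invariant 3-torus of a Hamiltonian flow is necessarily contained in a single level set of the Hamiltonian, the energy $E(x,v)=\tfrac12\langle v,v\rangle+\Phi(x)$ is constant on each torus of $\mathcal{T}$, and likewise $E'(x,v)=\tfrac12\langle v,v\rangle+\Phi'(x)$ is constant on the corresponding torus of $\mathcal{T}'$. Applied to a single patch $\Gamma$, this gives
\begin{equation*}
\Phi(x)-\Phi'(x)\;=\;E_\Gamma - E'_\Gamma\qquad\text{for every }(x,v)\in\Gamma,
\end{equation*}
so the difference $\Phi-\Phi'$ is locally constant on the position projection $\pi_x(\Gamma)\subset\R_x^3$.

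Next I would promote this pointwise statement to an open-set statement. A patch $\Gamma$ is a 3-dimensional piece of a 3-torus sitting in 6-dimensional phase space; at any point where the projection $\pi_x\colon \Gamma\to\R_x^3$ has full rank, the implicit function theorem produces an open 3-dim neighbourhood in the image. The turning set (where some velocity component vanishes) is lower-dimensional on the torus, and because patches are allowed to be arbitrarily small the covering requirement can be enforced using only sub-patches based at regular points of $\pi_x$. On each such sub-patch, $\Phi-\Phi'$ is therefore constant on a genuine open subset of $\R_x^3$.

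To conclude, I would glue the local constants via connectedness. If two patch projections overlap at a point $x_0$, then both associated constants equal $\Phi(x_0)-\Phi'(x_0)$ and must coincide; chaining such overlaps along a path, which is possible by the hypothesis that $\widehat{\mathcal{T}}$ covers the accessible part of $\R_x^3$ in a connected manner, yields a single constant $c$ with $\Phi-\Phi'\equiv c$ everywhere. This is precisely uniqueness up to an additive constant.

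The main obstacle is the projection/covering step rather than anything algebraic. One must make sure that the assumed cover of $\R_x^3$ cannot degenerate so as to consist only of lower-dimensional projection images, and that the chain of overlapping regular-point patches linking two arbitrary positions actually exists. Both points rely on the freedom to shrink and shift patches along each torus; once that geometric bookkeeping is in place, the rest of the argument reduces to a one-line subtraction of the two Hamiltonians.
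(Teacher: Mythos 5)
Your argument is correct and is essentially the paper's own proof: energy $E=\tfrac12\langle v,v\rangle+\Phi(x)$ is constant on each torus, so on every patch the potential is pinned down up to the unknown torus energy, and these constants are propagated through shared $x$-values by the assumed connected covering --- the paper simply states this constructively for $\Phi$ itself rather than as constancy of the difference $\Phi-\Phi'$. Your middle paragraph about regular points of the projection $\pi_x$ is superfluous caution, since the gluing only needs common points $x_0$ in the projections (the paper's remark even notes that two points per torus suffice), but it does no harm.
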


\begin{proof} The value $E$ of the energy (\ref{ene}) is constant on each torus.
Let us choose a patch $\Gamma_0$ on which
$\Phi(x)=E_0-\frac{1}{2}\langle v,v\rangle$, where $E_0$ is
the energy on the corresponding torus. Thus $\Phi(x)$ is defined up to
an arbitrary constant for all $x$ of the torus. If we have another patch
$\Gamma_1$ 
with a point $x=x_0$ common with $\Gamma_0$, and the corresponding
phase-space points on the two patches are, respectively, $(x_0,v_0)$ and
$(x_0,v')$,
the value $E_1$ of the energy on $\Gamma_1$ must be 
\bee
E_1=\Phi(x_0)+\frac{1}{2}\langle v',v'\rangle=E_0+\frac{1}{2}(\langle v',v'\rangle-\langle v_0,v_0\rangle). 
\eee
This defines $\Phi(x)$ on $\Gamma_1$: $\Phi(x)=E_1-\frac{1}{2}\langle v,v\rangle$. 
This chain of patches can be continued to cover $\Phi(x)$ everywhere (up to the
arbitrary constant chosen for $E_0$ at the beginning). 
\end{proof}

\begin{remark} The connected manner thus means that $\Gamma$ on different tori
can be arranged in a sequence by common values of $x$ as above. 
Actually, any patch on a torus can even be replaced by 
a set of two points that do not have to be close to each other: 
to establish the connected chain for $\Phi(x)$, we only 
require two points with different $x$ on one torus, and one of the $x$
shared with a point on another torus so that all $x$ are covered.
\end{remark}

\begin{remark} The lemma can also be 
expanded to concern all potentials (not just integrable
systems) by defining $\Gamma$ to be sections of orbits having common
points $x$. In fact, just one chaotic orbit is sufficient as
it eventually defines $\Phi(x)$ at all $x\in\R^3$. More generally, $\Gamma$
can denote parts of any structures on which $E$ is constant, 
or parts of isosurfaces of any functions of the form $f(E)$. Note that an 
isosurface need not be connected, i.e., $f(E)$ need not
be monotonous. One value of $f$ can correspond to more than one value of 
$E$ as long as the branches of different $E$ can be identified: 
$\Gamma\leftrightarrow E$.\label{remfe}
\end{remark}

The lemma states that even highly fragmentary information on the
shape of the tori in phase space is well sufficient to determine the
integrable potential $\Phi(x)$ uniquely.
Now, let independent distribution functions 
$f_i(x,v)$, $i=1,2,3$, in an integrable potential $\Phi(x)$ 
be defined everywhere in $\R^3\times \R^3$. By the strong
Jeans theorem, the isosurfaces $f_i(x,v)=const$ are of the form
$f[I(x,v)]=const$, where $I(x,v)=const$ 
define the tori created by $\Phi(x)$. The 3-surfaces
formed by the intersection of three 5-surfaces $f_i(I)$ are 3-tori
(defined by $I$ as well). 
Then, by the lemma, any collection of parts of surfaces $f_i=const$ sufficient
to determine a connected chain of torus patches uniquely determines
$\Phi(x)$. We can now state the following uniqueness theorem:

\begin{theorem} Let three independent steady-state 
distribution functions $f_i(x,v)$, $i=1,2,3$ 
($f:\R^3\times \R^3\rightarrow\R$) of 
matter in an integrable system be defined in some common domains of
$\R^3\times \R^3$ such that a set of parts of the surfaces $f_i(x,v)=const$ 
forms a succession of torus patches connected in $x$. Then the $f_i(x,v)$
uniquely determine the potential $\Phi(x)$.\label{theo1}
\end{theorem}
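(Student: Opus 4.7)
The plan is to reduce the theorem to the preceding lemma by showing that the isosurface data, together with independence of the $f_i$, reveal the individual 3-tori of $\Phi$. First I would apply the strong Jeans theorem (\ref{jeans}) to write each $f_i(x,v) = \hat f_i[I(x,v)]$ with $\hat f_i : \R^3 \to \R$ a function of the three isolating integrals. Each phase-space isosurface $\{f_i = c_i\}$ is therefore the preimage under the map $I$ of the level set $\{\hat f_i = c_i\} \subset \R^3$; since $I$ is constant on every torus, this preimage is a union of entire 3-tori, generically of dimension five in phase space.

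Independence of the $f_i$ means that $d\hat f_1, d\hat f_2, d\hat f_3$ are linearly independent at generic $I$, i.e., the Jacobian $\partial(\hat f_1,\hat f_2,\hat f_3)/\partial(I_1,I_2,I_3)$ is non-singular there. By the inverse function theorem the simultaneous conditions $\hat f_i = c_i$ locally pin down $I$ uniquely; pulling back, the intersection of the three 5-surfaces $\{f_i = c_i\}$ in phase space is locally a single 3-torus of the system. Small pieces of the three isosurfaces therefore cut out arbitrarily small patches $\Gamma$ of genuine tori belonging to $\Phi$.

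By hypothesis, a selection of such patches forms a succession connected in $x$, exactly the configuration invoked by the preceding lemma (and sharpened by the following remark, which even allows the patches to degenerate to just a few well-chosen points per torus). The lemma then reconstructs $\Phi(x)$ uniquely up to an additive constant throughout the region of $\R^3$ accessible to the system, which is what is to be shown.

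The main technical obstacle is the exceptional locus where the Jacobian $\partial(\hat f_1,\hat f_2,\hat f_3)/\partial(I_1,I_2,I_3)$ drops rank: there, three isosurfaces need not isolate a single torus but rather a positive-dimensional family, so the identification ``intersection $\leftrightarrow$ torus'' fails. One must restrict attention to the open stratum on which the independence of the $f_i$ is effective and check that the connectedness hypothesis of the theorem still furnishes enough patches on that stratum to build the chain required by the lemma; once this is verified, the chain-of-patches argument of the lemma transfers directly and yields $\Phi$.
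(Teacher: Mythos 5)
Your proposal follows essentially the same route as the paper: invoke the strong Jeans theorem to write the $f_i$ as functions of the isolating integrals, observe that the intersection of the three independent 5-dimensional isosurfaces cuts out 3-tori, and then feed the resulting chain of torus patches connected in $x$ into the preceding lemma to recover $\Phi(x)$ up to an additive constant. Your closing concern about the locus where the Jacobian $\partial(\hat f_1,\hat f_2,\hat f_3)/\partial(I_1,I_2,I_3)$ degenerates is handled in the paper's remark following the theorem, which allows a level set $\lbrace f_i=C_i\rbrace$ to correspond to finitely many tori provided the branches can be distinguished in the sense of the lemma.
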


\begin{remark} We can pose this in the integral space
$\R_I^3$ as well: we assume that the $f_i(I)$, $I\in\R^3$, are such that
the set of equations $\lbrace f_i(I)=C_i\rbrace$ has a nonzero and
finite number of
solutions $I$ for sufficiently many $\lbrace C_i\rbrace$ such that
a connected sequence of patches can be constructed
(both $I$ and $f_i(I)$ are, of course, unknown prior to
the determination of $\Phi(x)$). 
Note that the number can be larger than one, i.e.,
$f_i(x,v)$ need not form a proper basis of isolating integrals:
if a set $\lbrace f_i=C_i\rbrace$ corresponds to more than one torus (more
than one set of $\lbrace I_i=C'_i\rbrace$), it is sufficient to be able to
distinguish the different tori in the sense of the lemma (cf.\ remark \ref{remfe} of the
lemma).
\end{remark}

\begin{remark} The theorem emphasizes the role of different 
distribution functions $f_i(x,v)$ in providing information about the system.
If different $f_i$ based on the observations of various steady-state
object populations are
available, we gain more by studying $f_i$ rather than the total 
distribution function $f_{\rm tot}=\sum_i f_i$ of all luminous mass. 
\end{remark}

\begin{remark} Most of all the feasible ways of filling
$\R^3\times \R^3$ with invariant tori, each with a
constant $J$ of  (\ref{action}), are not derivable from any potential
(cf.\ \cite{kaasham}). Thus there usually is no
integrable potential $\phi$ other than $\Phi$
that, with its tori $I_\phi$ and some distribution function $f'$, 
would yield $f'[I_\phi (x,v)]=f[I_\Phi (x,v)]$ everywhere in $\R^3\times \R^3$,
so in most cases one distribution 
function $f(x,v)$ in $\R^3\times \R^3$
determines $\Phi(x)$.\label{remsingle}
\end{remark}

The theorem is constructive for three $f$'s from which
we directly get $\Phi(x)$ with the procedure of the lemma, while the case
of one $f$ (remark \ref{remsingle}) is non-constructive. Finding the integrable
$\Phi(x)$ corresponding to one $f$ is not obvious since
there are no general procedures for finding and exploring integrable 
potentials \cite{licht}. The set of potentials giving rise to 
integrable systems is known to be a vanishingly small subset of all potentials 
\cite{licht}, and there is no such $\epsilon\ne 0$ that,
for an integrable $\phi(x)$, $\phi(x)+\epsilon\varphi(x)$ is still integrable
for an arbitrary $\varphi(x)$. In practice, we can
circumvent this difficulty by allowing the use of non-integrable
potentials and approximate tori, i.e., we construct a set of approximate
integrals $I(x,v)$ for a potential $\Phi(x)$ \cite{kaasprl}; 
this approach will be discussed in section 6.

The importance of isosurfaces in an integrable system
is that they relieve us from having to record
the time evolution of orbits. If the orbital 
tori are definable, we only need to probe phase space at one moment.
It is important to note that $f(x,v)$ does not need to cover all matter in
the system (then the single-$f$ uniqueness 
would be trivial for all systems): it can 
represent any fraction of the matter, even virtually massless test particles,
as long as it is defined on all tori and it has settled to a steady-state
equilibrium. This underlines the fact that we are interested more in 
the geometric structures in $\R^3\times \R^3$ (in practice, in a sufficiently
large domain $\Omega$ of observations) rather than the
actual distribution function $f$ itself.

The fact that we do not need to observe all of the matter in the system to
determine $\Phi(x)$ is of crucial importance. Because of the abundance
of dark matter, a large part of all
matter will not be seen even in ideal conditions.
If, for example, $f$ represents all luminous matter, we
immediately obtain the mass density $\rho_D(x)$ of dark matter by
subtracting (\ref{rho}) from (\ref{poisson}) :
\be
\rho_D(x)=\frac{1}{4\pi G}\nabla^2\Phi(x)-\int_{V_+} \tilde f(x,v)\, d^3 v.
\label{dark}
\ee

Another cause of unobservability is the inevitable fact that 
our instruments are not sensitive enough, or that, e.g., interstellar dust 
prevents us from seeing some regions of space properly. Such effects can
be described by the bias or selection function $\gamma(x,v), \,0<\gamma\le 1$, that
represents the fraction of observable matter of $f(x,v)$ in $d^3 x\, d^3 v$.
The potential $\Phi(x)$ can
be determined uniquely even in the presence of $\gamma(x,v)$ as we
immediately can see if we suppose that we have independent $f_i(x,v)>0$ 
available that share the same $\gamma(x,v)$, since now the ratios of
$f_i$ can be viewed as a new basis of isolating integrals:

\begin{theorem} Let the products 
$g_i(x,v)=\gamma(x,v)\,f_i(x,v)>0$, 
$i=1,\dots,4$, of a bias function $0<\gamma\le 1$ and four  independent
steady-state distribution functions $f_i$ be defined in $\R^3\times\R^3$.
Then the three ratios 
\bee
\frac{g_i(x,v)}{g_1(x,v)}=\frac{f_i(x,v)}{f_1(x,v)},\quad i=2,3,4,
\eee
uniquely determine the potential $\Phi(x)$ as in theorem \ref{theo1}.
\end{theorem}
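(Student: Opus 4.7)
The plan is to reduce the statement to Theorem \ref{theo1} by observing that the bias function $\gamma$ cancels identically from each ratio, so that the three observables $g_i/g_1$ behave exactly like three independent, unbiased steady-state distribution functions. First I would invoke the strong Jeans theorem on each $f_i$ to write $f_i(x,v)=F_i[I(x,v)]$ for some $F_i:\R^3\to\R$. Since $g_1>0$ everywhere, we get
\bee
h_i(x,v):=\frac{g_i(x,v)}{g_1(x,v)}=\frac{\gamma(x,v)\,F_i[I(x,v)]}{\gamma(x,v)\,F_1[I(x,v)]}=\frac{F_i[I]}{F_1[I]}=:H_i[I(x,v)],\qquad i=2,3,4.
\eee
Thus each ratio is itself a steady-state function on $\R^3\times\R^3$ depending only on the isolating integrals, and all dependence on $\gamma$ has vanished.

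Next, I would verify that the triple $(h_2,h_3,h_4)$ satisfies the hypotheses of Theorem \ref{theo1}. For each choice of constants $(C_2,C_3,C_4)$, the joint level set $\{h_2=C_2,h_3=C_3,h_4=C_4\}$ is a union of isosurfaces of the form $\{H_i[I]=C_i\}$, whose common solutions $I\in\R_I^3$ each correspond to a 3-torus of $\mathcal T$. The reason four (and not three) distribution functions are required is that forming the ratios collapses one degree of freedom: dividing by $F_1$ removes the common overall amplitude, so we need a fourth independent $F_i$ to recover three functionally independent functions on $\R_I^3$. Under the stated independence of the $f_i$, the three $H_i$ will generically be independent on $\R_I^3$, and their common level sets pick out a nonzero, finite number of tori — exactly as in the remark following Theorem \ref{theo1}. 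Applying the lemma to the resulting chain of patches connected in $x$ then reconstructs $\Phi(x)$ up to an additive constant, just as in the proof of Theorem \ref{theo1}.

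The main obstacle is pinning down the correct meaning of \emph{independence} for the four $f_i$. A naive reading — that any four independent functions of the three-dimensional $I$ automatically yield three independent ratios — can fail on special loci where several $F_i$ share a common multiplicative factor, so that the $H_i$ degenerate and the level set $\{H_i=C_i\}$ becomes higher-dimensional. The right reading is the one adopted in the remark after Theorem \ref{theo1}: we require that for sufficiently many $(C_2,C_3,C_4)$ the system $\{H_i=C_i\}$ has a finite, distinguishable set of solutions in $\R_I^3$, so that a connected chain of torus patches in $x$ can be assembled from the observed level sets of the ratios $g_i/g_1$. Granted that interpretation, the bias function drops out cleanly and Theorem \ref{theo1} closes the argument.
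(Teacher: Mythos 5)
Your proposal is correct and follows essentially the same route as the paper: the bias $\gamma$ cancels in the ratios $g_i/g_1$, which by the strong Jeans theorem are functions of the isolating integrals alone (the paper phrases this as the ratios forming ``a new basis of isolating integrals''), and then Theorem \ref{theo1} together with its remark and the lemma's chain-of-patches construction yields $\Phi(x)$ up to an additive constant. Your added discussion of why four $f_i$ are needed and of the precise independence requirement is a reasonable elaboration of what the paper leaves implicit.
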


\begin{remark} As above, this emphasizes the combined information content of 
different distribution functions. Analogously with remark \ref{remsingle}, we can
stipulate for two $f_i$ that in most cases two products $g_i(x,v)=\gamma(x,v)\,f_i(x,v)>0$, 
$i=1,2$, defined everywhere in $\R^3\times\R^3$,
determine $\Phi(x)$ via their ratio $g_2(x,v)/g_1(x,v)=f_2(x,v)/f_1(x,v)$.
\end{remark}

\begin{remark}
The bias function $\gamma(x)$ may depend on $x$ only;
$\gamma: \R_x^3\rightarrow \R$. In such cases
the additional factor $\gamma(x)$
does not really increase the possibility of the existence of another
integrable potential $\phi(x)$ and bias function $\xi(x)$, $0<\xi\le 1$, 
such that $\xi f'=\gamma f$ everywhere. 
This would require there to be an isolating integral
of $\phi(x)$ that everywhere matches an isolating
integral of another potential ($\Phi(x)$) when multiplied with a function of 
$x$ only ($\xi(x)/\gamma(x)$),
while general isolating integrals
are mixed functions of $x,v$, and individual potentials. No examples of
such transformations are known. Thus, usually the single product
$\gamma(x)f(x,v)$ determines the potential $\Phi(x)$ of the
system.\label{remgamx}
\end{remark}

The above results mean that dark matter, selection bias, or fragmentary data
are no fundamental obstacles to dynamical tomography. It should be noted
that, for near-integrable systems in the sense of Kolmogorov-Arnold-Moser
(KAM) theorem \cite{arn}, most of
our results for integrable potentials still hold
to a high precision. This is because they are based on the toroidal
topology of motion in phase space, and almost all motion in near-integrable
systems still occurs on tori that can be viewed as perturbed
versions of the invariant tori of an integrable system. 
Even though the geometric 
structure of orbits in phase space changes radically as we move away
from integrability, allowing chaotic orbits and substructures
of foliated tori around closed orbits, the
transition is smooth dynamically, i.e., motion is still mostly quasitoroidal
within a given resolution \cite{licht}.

The solution of the inverse problem is based on the large number $N$ of 
observed $(x,v)$, from which the distribution function $f(x,v)$ can 
be estimated.
With smaller effective number of phase-space dimensions than six,
i.e., in the case of potential symmetries, $N$ may be sufficient
for actual binning in sufficiently small volumes
$\Delta x\Delta v$ in the reduced phase space. 
Thus we sample $f(x,v)$ directly by counting and can take it
to be our observable, as in section 4. 
For six dimensions, we can, e.g.,  take random samples
of marginal distributions to be our observables, as will be discussed in
section 5. 

\subsection{Self-consistency regularization}

If we assume $\tilde f$ 
to represent all matter in the system, the self-consistency
requirement of $\tilde f$ and 
$\Phi$ can be used as regularization. For example,
we can minimize the discrepancy between (\ref{rho}) and (\ref{poisson}) at each space volume
with the function
\bee
\lambda \int \vert\rho_D(x)\vert\,d^3x,
\eee
where $\lambda$ is a suitable weight factor, and $\rho_D(x)$ is given
by (\ref{dark}). In practice, this can be replaced by
\be
\lambda \sum_i [\rho_D(x_i)]^2,\label{reg0}
\ee
for a large number of test locations $x_i$, written
in a manifest $\chi^2$-form.
As will be discussed below, it is sometimes better to compare potential values
rather than densities
at test points due to numerical instabilities
in evaluating Poisson's equation directly. 

If any amount of dark matter is allowed, no regularization such as 
(\ref{reg0}) can be used if there is no information on the masses
of the stars, i.e., observations correspond to the number
density, and the estimated mass would only be based on the regularization.
This is easy to see by considering the case of constant density fraction of luminous matter:
$\rho_{\rm lum}(x)/(\rho_D+\rho_{\rm lum})=const<1$ 
everywhere in $\R_x^3$.
Regularization of the form (\ref{reg0}) would then always yield exactly
$\rho_D=0$, regardless of the true density ratio. If masses are included
in observations, i.e., we would solve for $\tilde f$ even without
regularization, (\ref{reg0}) can be employed, though minimizing the amount of dark
matter is not necessarily
an appropriate principle then.

\section{Examples in spherical symmetry}

Spherically symmetric systems have the advantage of being integrable, as
now the energy $E(x,v)$ as well as the three components of the angular momentum
$P(x,v)=x\wedge v$ are isolating integrals; the fact that there are four 
integrals instead of three is reflected in that each orbit is constrained 
to a plane (e.g., \cite{bt}). 
The original Jeans' theorem applies only to systems
without such degeneracies, but it can be extended to the spherical case
\cite{lynden},
stating that any steady-state distribution function in a potential
$\Phi(r)$, where $r=\Vert x\Vert$ is the radius, must be of the
form $f(E,P)$. Further, since the system is spherically symmetric, the
direction of $P$ is uniformly distributed over the unit sphere $S^2$, so we
are only interested in functions of the form $f(E,L)$, where
\be
L(x,v)=\Vert x\wedge v\Vert. 
\ee

It is customary (e.g., \cite{bt}) to use the
concepts of relative potential and energy $\Psi$ and $\mathcal E$ such that
\be
\Psi(x):=-\Phi(x)+\Phi_0,\qquad {\mathcal E}(x,v):=\Psi(x)-\frac{1}{2}\langle v,v\rangle
=-E(x,v)+\Phi_0,
\ee
where $\Phi_0$ is chosen such that $f>0$ for ${\mathcal E}>0$
and $f=0$ for ${\mathcal E}\le 0$ (usually, $\Phi_0=0$). We now
study potentials $\Psi(r)$ and distribution functions of the form $f(\E,L)$.
In this section, we simplify the notation by using $f(\E,L)$ 
directly to denote the mass distribution function (i.e., $\tilde f$) 
as it is the form the analytical treatment here always refers to. For
simplicity, let us also assume all the stars to have the same mass $m$.
The systems as well as inversion methods used in this section are
simple ones, as the goal is to investigate the basic properties of the
inverse problem rather than develop a full-fledged analysis machinery.

Our phase space is now essentially reduced to radius $r$
and velocity $v$ in a plane. Because of this reduction in variables,
spherically symmetric cases offer analytical tractability to a much
larger extent than other systems. Of course spherical symmetry rarely occurs
in actual gravitational systems, but it is highly useful for exploring the
basic properties of the inverse problem (uniqueness, stability, the effect
of sampling density, convergence of solution, etc.) in practical computation.
Indeed, because of their analytical prospects, spherical systems 
and distribution functions of the form $f(\E,L)$ have enjoyed
ever-continuing popularity in theoretical dynamics for over a century.

Since now we know the exact functional forms of the isolating integrals, 
we can immediately see that
the $\gamma(x)$-uniqueness of remark \ref{remgamx}
holds exactly for any nontrivial $f(\E,L)$ and $\Psi(r)$. 
All distribution functions $f(x,v)$ now have to be of the form
$f[\E(x,v),L(x,v)]$, and obviously no function $f[\E(x,v),L(x,v)]$
is separable in the form $h(x)g[\E(x,v),L(x,v)]$. Thus there is no
bias function $\xi(x)\ne\gamma(x)$ allowing a product
$\xi(x) f'_{\psi}(x,v)$ to be identical with $\gamma(x) f_{\Psi}(x,v)$
everywhere (note that the bias function $\gamma$ does not have to be 
spherically symmetric). Thus we can state the following uniqueness theorem:

\begin{theorem} If a bias function $\gamma(x)$ 
depends on $x$ only: $\gamma: \R_x^3\rightarrow \R$, the product
$\gamma(x)f(x,v)$, defined everywhere
in $\R^3\times\R^3$ in a spherically symmetric steady-state system, 
uniquely determines $\gamma(x)$.\label{theo3}
\end{theorem}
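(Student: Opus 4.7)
The plan is to argue by contradiction: suppose the observed product admits two distinct decompositions,
\bee
\gamma(x)\,f(x,v)=\xi(x)\,f'(x,v)\qquad\text{on all of }\R^3\times\R^3,
\eee
and show that necessarily $\gamma=\xi$ (up to the inevitable rescaling $(\gamma,f)\mapsto(c\gamma,f/c)$, which is fixed by the normalization $0<\gamma\le 1$ of a bias function). By the Jeans theorem for the spherical steady-state case quoted above, $f(x,v)=F(\E_\Psi(x,v),L(x,v))$ and $f'(x,v)=F'(\E_\psi(x,v),L(x,v))$ for some spherically symmetric potentials $\Psi,\psi$.

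First I would show that $h(x):=\gamma(x)/\xi(x)=f'(x,v)/f(x,v)$ depends only on $r=\Vert x\Vert$. Since $\E_\Psi,\E_\psi$ and $L$ are invariant under the simultaneous rotation $(x,v)\mapsto(Rx,Rv)$ for any $R\in SO(3)$, both $f$ and $f'$ are rotationally invariant as well; writing the decomposition at $(x,v)$ and at $(Rx,Rv)$ and dividing yields $\gamma(Rx)/\xi(Rx)=\gamma(x)/\xi(x)$, so $h=h(r)$.

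Next I would exploit the non-separability emphasized in the paragraph preceding the theorem. Fix a radius $r$ and let $v$ range over $\R^3$; the pair $(\tfrac12\Vert v\Vert^2,L)$ then sweeps out the two-dimensional region $\{(V,L):V\ge 0,\,0\le L\le r\sqrt{2V}\}$, and the identity
\bee
F\bigl(\Psi(r)-\tfrac12\Vert v\Vert^2,\,L\bigr)=h(r)\,F'\bigl(\psi(r)-\tfrac12\Vert v\Vert^2,\,L\bigr)
\eee
must hold pointwise on this region. Differentiating in $\Vert v\Vert^2$ at fixed $L$ gives the matching of logarithmic energy-derivatives $(\log F)_\E(\E_\Psi,L)=(\log F')_\E(\E_\psi,L)$, with the shift $\E_\psi-\E_\Psi=\psi(r)-\Psi(r)=:\Delta(r)$. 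For generic $F,F'$ (i.e., not pure exponentials in $\E$), requiring this identity across the continuous range of $r$-values forces $\Delta(r)$ to be $r$-independent, so $\psi=\Psi$ up to the irrelevant additive constant in the potential. The original identity then reduces to $F/F'=h(r)$; but the left side depends only on $(\E,L)$, and any fixed $(\E,L)$ pair lies in the admissible region for an open range of $r$, hence $h(r)$ is constant. The bias-function normalization pins this constant to unity, giving $\gamma=\xi$.

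The step I expect to be the main obstacle is ruling out the alternative potential, i.e., deriving $\Delta(r)\equiv\text{const}$ from the logarithmic-derivative identity: the argument fails for the degenerate family $F(\E,L)=e^{\alpha\E}G(L)$, which is invariant under arbitrary $\E$-translations and would allow a distinct $(\psi,\xi)$ to mimic $(\Psi,\gamma)$. I would handle this either by imposing a mild nondegeneracy hypothesis on $F$ or by noting that physically admissible steady-state profiles (finite mass, decay at large $|\E|$, etc.) automatically avoid this form; this caveat should be made explicit in the formal statement.
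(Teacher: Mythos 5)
Your proposal is correct in substance, and it reaches the theorem by the same underlying idea as the paper while actually supplying the argument that the paper only asserts. The paper's entire proof is the paragraph preceding the theorem: by the spherical Jeans theorem every steady-state distribution function has the form $f(\E,L)$, and it is declared ``obvious'' that no such function can be written as $h(x)\,g[\E(x,v),L(x,v)]$ with nonconstant $h$, so no $\xi\ne\gamma$ (paired with another spherical $f'_\psi$) can reproduce the product; this leans on remark \ref{remgamx}, which in the general case is only a plausibility statement. You prove the non-separability: simultaneous rotation invariance of $f$ and $f'$ forces the ratio $\gamma/\xi$ to depend on $r$ alone, and your $\Delta(r)$ translation argument in $\E$ forces $\psi=\Psi$ up to an additive constant and then $h(r)=\mathrm{const}$ — exactly the content the paper takes for granted. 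Your flagging of the degenerate family $F=e^{\alpha\E}G(L)$ is a genuine contribution rather than a defect of your proof: for that family, $\xi(x)=\gamma(x)\,e^{\alpha(\Psi(r)-\psi(r))}$ with a different spherical $\psi$ reproduces the same product, so the theorem as literally stated needs the paper's vague qualifier ``nontrivial $f(\E,L)$'' to be read as excluding precisely this case; making the nondegeneracy hypothesis explicit, or invoking an escape-energy cutoff (the support $\lbrace \frac{1}{2}\Vert v\Vert^2<\Psi(r)\rbrace$ is visible in the product and already pins down $\Psi$) or finite mass, is the right repair and sharper than what the paper offers.

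Two minor caveats. Your division and logarithmic differentiation need $f>0$ and some regularity on the region used; a derivative-free two-radius translation-invariance argument gives the same conclusion without smoothness assumptions. Also, the overall rescaling $(\gamma,f)\mapsto(c\gamma,f/c)$ is not removed by $0<\gamma\le 1$ alone — one needs, e.g., $\sup\gamma=1$ or a known total mass — but this ambiguity is present in the paper's formulation as well and does not distinguish your argument from it.
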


\begin{remark} Similarly, the restricted form $f[\E(x,v),L(x,v)]$ corroborates
remark \ref{remsingle} algebraically.
\end{remark}

\begin{remark} St\"ackel potentials $\Upsilon(x)$ \cite{bt,dz,gold}
are a class of integrable potentials for which the Hamilton-Jacobi
equation is separable in ellipsoidal coordinates. Just like in the
spherical case, the functional forms of their isolating integrals 
in the distribution function $f[E(x,v),I_2(x,v),I_3(x,v)]$ are known 
exactly \cite{dz}. Examining the forms of the
isolating St\"ackel integrals, we can immediately see that
corresponding uniqueness results 
hold for all steady-state systems governed by
St\"ackel potentials $\Upsilon(x)$.
\end{remark}

The simplest nontrivial distribution functions are
obtained by just removing $L$-dependence and using isotropic $f(\E)$. The
solution for a spherically symmetric system whose $f(\E)$ 
is self-consistent with its
density $\rho(r)$ and the corresponding potential $\Psi(r)$
is now simple to obtain. By writing
(\ref{rho}) in spherical symmetry and changing
the integration variable $v\rightarrow\E$, we obtain an Abel integral
equation for $f(\E)$, the solution of which is the well-known
Eddington's formula \cite{bt}
\be
f(\E)=\frac{1}{\sqrt 8 \pi^2}\Big[\int_0^\E \frac{d^2\rho}{d\Psi^2}
\frac{d\Psi}{\sqrt{\E-\Psi}}+\frac{1}{\sqrt\E}\Big(\frac{d\rho}{d\Psi}
\Big)_{\Psi=0}\Big],\label{eddi}
\ee
where $\Psi$ is used as the argument of $\rho$ via 
$\Psi(r)\rightarrow r(\Psi)$; see \cite{bt} for the conditions for $f(\E)$ to
be non-negative, etc. Note that (\ref{eddi}) can naturally be used
for non-self-consistent distribution functions 
(systems containing dark matter) as well. For example, we 
can solve for the distribution function $f_{\rm lum}(\E)$ of luminous matter
from the luminous fraction of the total density: 
$\rho_{\rm lum}(r)=\varrho(r)\rho(r)$, with a given 
$0<\varrho(r)<1$, thus replacing
$\rho\rightarrow\varrho\rho$ in (\ref{eddi}) while retaining the $\Psi(r)$
corresponding to $\rho(r)$.

With (\ref{eddi}), we can simulate observations of 
the distribution function of a spherically symmetric system,
and then numerically check how well its potential can be recovered.
If self-consistency is used for regularization,
Poisson's equation (\ref{poisson}) 
is simple to integrate, so the regularizing function 
corresponding to (\ref{reg0}), but in terms of potential rather
than density, can be given as
\be
\begin{split}
\chi^2_{\rm reg}&=
\sum_i \Big(\Psi(r_i)-16\pi^2 G[\frac{1}{r_i}\int_0^{r_i}\int_0^
{\sqrt{2\Psi(r')}}f[\E(\Psi(r'),u)]u^2du r'^2 dr'\\
&+\int_{r_i}^\infty\int_0^
{\sqrt{2\Psi(r')}}f[\E(\Psi(r'),u)]u^2du r'^2 dr']\Big)^2\label{regu}
\end{split}
\ee
where $u:=\Vert v\Vert$. 
When using gradient-based methods in optimization, integrals of the form
(\ref{regu}) must be differentiated algorithmically for consistency, 
i.e., by taking
the derivatives of the numerical algorithm for evaluating (\ref{regu})
rather than first taking the analytical derivative of (\ref{regu}) and then
evaluating the integral numerically.

The lemma (remark \ref{remfe}) and theorem \ref{theo3} ensure that there is a unique solution
for both the potential $\Psi(r)$ and the bias function $\gamma(x)$. 
The $\chi^2$-function of our inverse problem can now be represented as
\be
\chi^2=\sum_i\Big[N_im_i-\int_{\Omega_i}\gamma(x)
f[\E(\Psi(r),u)]\,d^3x\,d^3v\Big]^2
+\lambda\chi^2_{\rm reg},\label{chisq}
\ee
where 
$N_i$ is the number of stars observed in the region $\Omega_i$, and $m_i=m$.
To ensure positivity, let us model our potential and distribution function
as
\be
\Psi(r)=\exp\Big(\sum_i a_ir^i\Big),\qquad f(\E)=\exp\Big(\sum_i b_i\E^i\Big).
\label{modfuncs}
\ee
We could include an explicit $1/\E$-term in the exponential sum
for $f(\E)$ to make sure that $f\rightarrow 0$ as $\E\rightarrow 0$, 
but this turns out to be neither necessary nor useful in practice.
Because of the adopted form of $\Psi(r)$, it is especially useful to employ
(\ref{regu}) rather than density regularization as Poisson's equation
would introduce a singularity for one term.

Simple generalizations of the $r^{-1}$-type Kepler potential of a point
mass already yield interesting examples of systems of continuous mass
distribution. In fact, dynamical systems whose potentials 
decrease much faster than this
exhibit instability problems \cite{bt}. This is another reason
why a number of variations of the $r^{-1}$-theme have been developed.
For example, the isochrone potential
\be
\Psi_{IC}(r)=\frac{GM}{b+\sqrt{b^2+r^2}},
\ee
where $M$ is the mass of the system and $b>0$ a scaling constant, with its
density pair $\rho_{IC}(r)$ given by Poisson's equation, yields the
distribution function \cite{bt}
\be
\begin{split}
f_{IC}(\E)&=\frac{M}{(32GMb)^{3/2}}\frac{\sqrt{\tilde\E}}
{(1-\tilde\E)^4}
\Big[27-66\tilde\E+320\tilde\E^2-240\tilde\E^3\\
&+64\tilde\E^4
+3(16\tilde\E^2+28\tilde\E-9)\frac{\arcsin\sqrt{\tilde\E}}
{\sqrt{\tilde\E(1-\tilde\E)}}\Big],
\end{split}
\ee
where $\tilde\E=\E b/(GM)$. We will use this in our numerical examples below.

\begin{figure}
\begin{center}
\includegraphics[width=6.3cm]{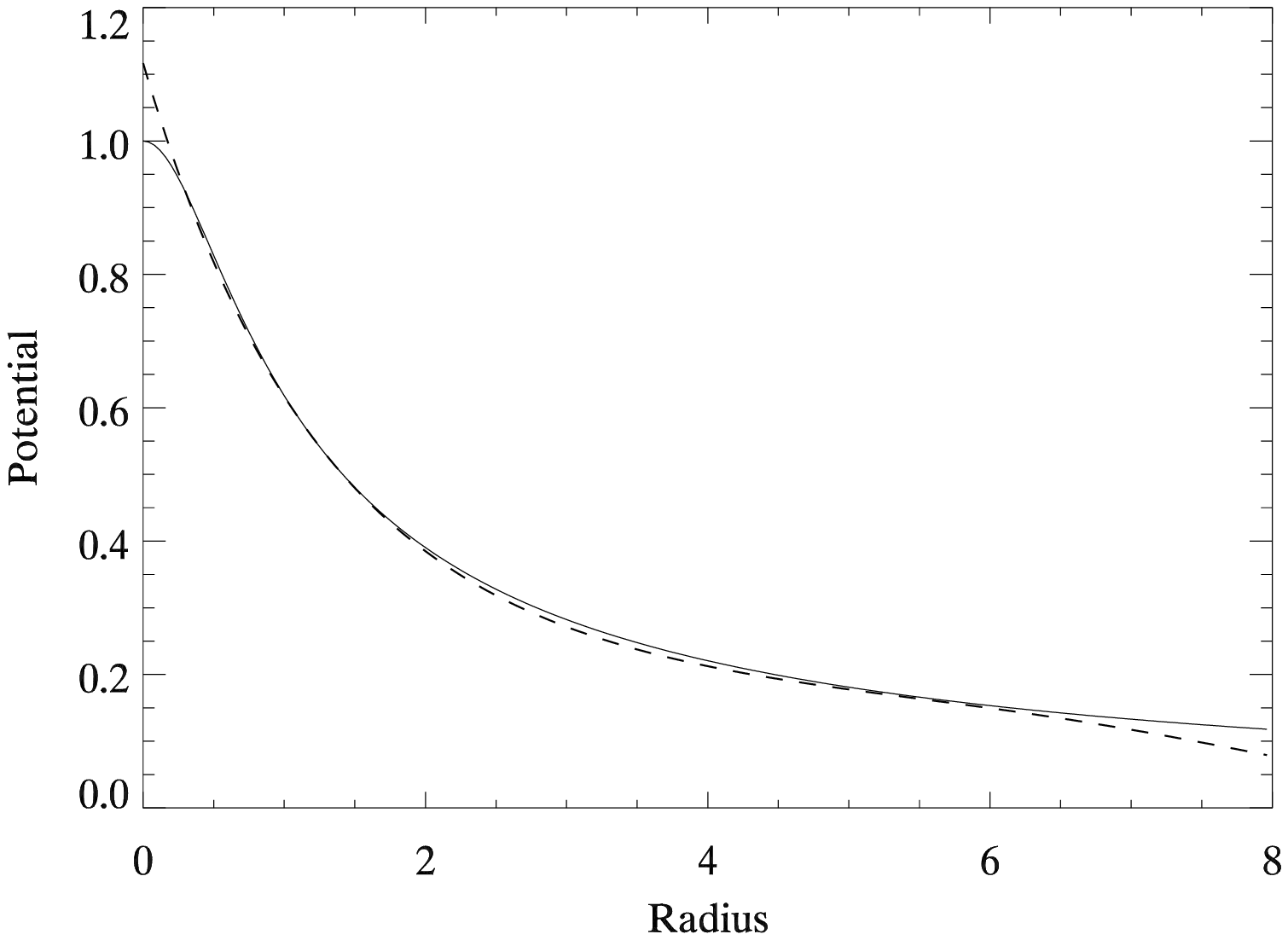}\includegraphics[width=6.3cm]{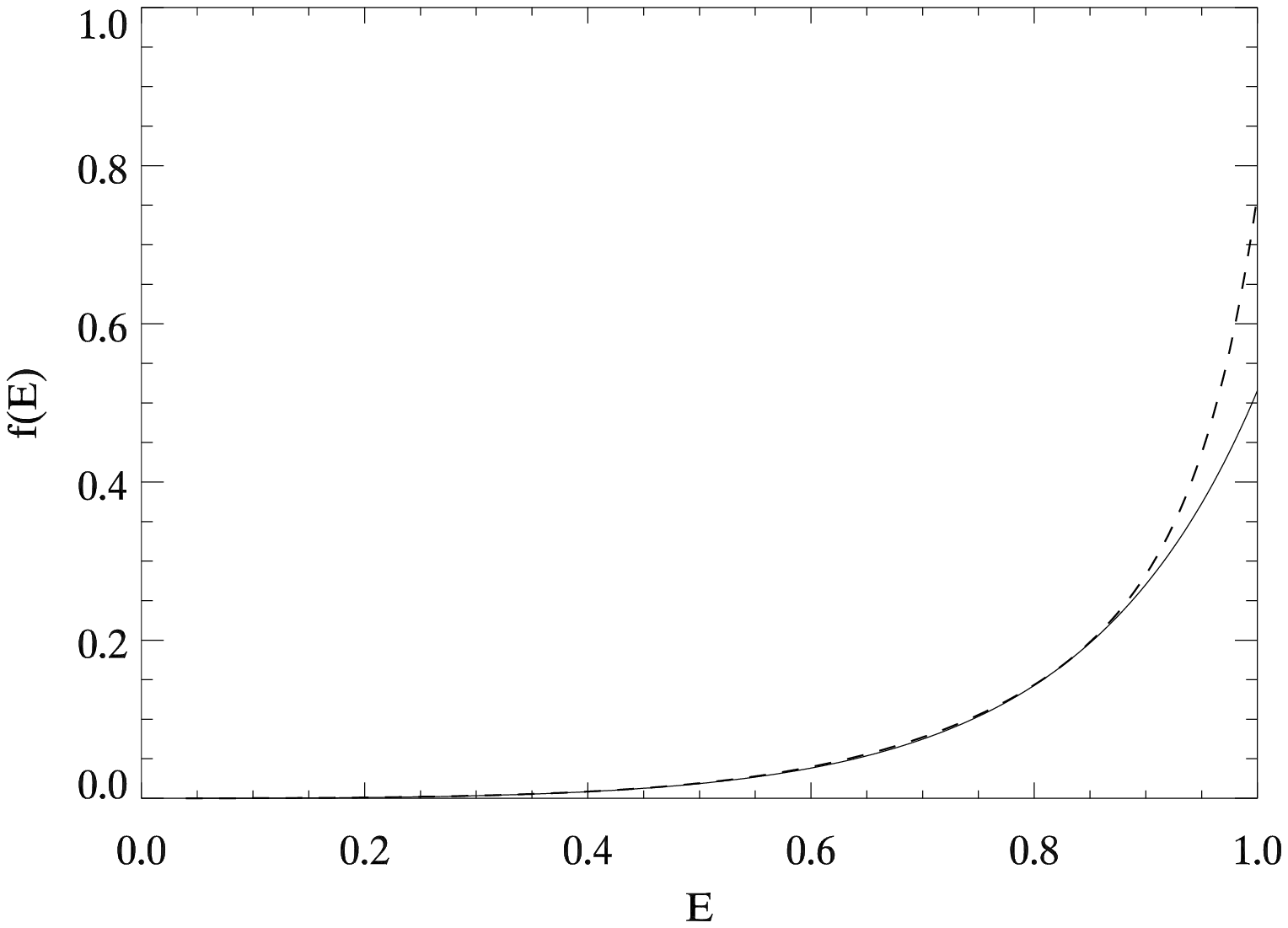}
\caption{(a) $\Psi(r)$ and (b) $f(\E)$ of the isochrone potential 
(solid lines), together with the best-fit models from observed binned
$f(x,v)$ (dashed lines).}
\label{ic}
\end{center}
\end{figure}

We simulated observed $N_i$ by creating 400 randomly distributed $\Omega_i$
in $(r,u)$-space (the extents of each $\Omega_i$
about $1\%$ of the maximum widths
in $r$ and $u$, with a suitable cutoff value $r_{\rm max}\gg b$)
and computing the amount of mass in them from $f_{IC}$. Any 
distribution function corresponding to an $r^{-1}$-type potential drops 
fast as $r$ increases; thus those $\Omega_i$
close to $r=0$ contain most of the mass and dominate $\chi^2$ if
the volumes of $\Omega_i$ are equal. A more sophisticated observation
sampling could use increased volumes away from $r=0$, but in any case
this sampling is artificial (as is the symmetry setup) 
and designed rather to probe the numerical
properties of the problem than to simulate realistic conditions.

The objective
function (\ref{chisq}) was minimized with the Levenberg-Marquardt
grad\-ient-based procedure \cite{press}. 
The initial guess for $\Psi(r)$ was based on the
top velocities at each $r$, and the initial $f(\E)$ simply had a 
correspondingly scaled zeroth-order term and a first-order term
anticipating $f\rightarrow 0$ as $\E\rightarrow 0$. Even with this crude
initial values, the procedure converged well towards the correct minimum,
i.e., $\chi^2$ does not have many local minima far away
from each other in the parameter space at least with reasonably low
truncation degrees of (\ref{modfuncs}). The minimum is not particularly
sharp, i.e., there are many virtually as good solutions ($\chi^2$-level 
varying a few percent) close to each other even in the noiseless case,
as expected because of the insufficiency of the model (``model noise''),
in particular near the centre $r=0$.

The solution was stable to typical error levels of data: adding 5\% or even 10\% noise to the
data values resulted in virtually the same $\Psi(r)$ and $f(\E)$.
An exponential bias function 
$\gamma(x)=\exp (-\Vert x-x_0\Vert/R_0)$ with the centre at different
values of $x_0$
was well solved for by the procedure, together
with $\Psi(r)$ and $f(\E)$. Fig.\ 1 a and b show 
the solution $\Psi(r)$ and $f(\E)$ (dashed lines), 
obtained with coefficients up to degrees 4 and 5, respectively,
against the actual $\Psi_{IC}$ and $f_{IC}$ (solid lines)
with $GM=1$ and $b=0.5$ ($r_{\rm max}=10$), a noise level of $5\%$, 
and a bias $\gamma(x)$ with $R_0=5$ and $\Vert x_0\Vert=2$.
With the bias, the sample cells in $\R_x^3$ were defined by $r$ and the polar
angle $\theta$ ($x_0$ was placed at $r=2$ and $\theta=0$).

Including the regularization term (\ref{regu}) for a self-consistent case 
improved the $\Psi(r)$ fit at large $r$; as a counterexample,
Fig.\ 1 a shows the case without regularization, corresponding
to allowing dark matter; with strong regularization,
the lines essentially coincide at all $r<r_{\rm max}$.
The slight ``overshooting'' of $\Psi(r)$- and $f(\E)$-models at the centre
of the system (at $r=0$ and $\E=\Psi(0)$) is partly an
intrinsic limitation of the polynomial model.

The form $f(\E,L)$ for the distribution function describes the anisotropy
of velocities; there are now infinitely many $f(\E,L)$ corresponding to
a given density $\rho(r)$. Creating general functions of this form even
for spherical symmetries is not as straightforward as Eddington's formula
for $f(\E)$. The Osipkov-Merritt procedure \cite{bt} 
is a standard technique for 
creating a family of self-consistent distribution functions 
with an adjustable anisotropy parameter, but
it is heavily restricted as it still essentially uses Eddington's formula 
by fixing a combination ${\mathcal U}=\E-cL^2$ and stating $f=f({\mathcal U})$.

In our case we can (and must) 
allow the existence of dark matter, so we can carry out the
simulations without the self-consistency requirement. Thus we can simply
use, e.g., 
\be
f(\E,L)=f_0(\E)h(\E,L),\label{fefel}
\ee
where $f_0(\E)$ is any suitable isotropic function, and
the function $0\le h(\E,L)\le 1$ 
can be chosen to represent any desired dynamical
characteristics of the observed matter (anisotropy 
of angular momentum distribution in
various regions, etc.) via some adjustable parameters. The density $\rho(r)$
from this is always $\rho(r)\le\rho_0(r)$, and the difference
$\rho_0-\rho$ is attributed to dark matter; whether or not this interpretation is
realistic is not relevant to this study. 

Suppose we want to describe a system with central orbits mostly radial and 
those reaching high radii mostly tangential. Then we can choose, for example,
\be
h(\E,L)=\frac{1}{1+d}\Big\lbrace 1-d\cos\Big[\pi\Big(\frac{L^2}{L_m^2(\E)}+\frac{\E}{\E_m}\Big)\Big]\Big\rbrace,
\label{hel}
\ee
where $0\le d\le 1$ is an amplitude factor, $\E_m$ is the maximal $\E$, and 
the maximal angular momentum $L_m=r\,u(r,\E)$ occurs for given $\E$ at 
$\hat r$ for which $d L_m(\hat r)/dr=0$, i.e., $L_m(\E)$ is obtained from
\be
2\E=\hat r\frac{d\Psi_0(\hat r)}{dr}+2\Psi_0(\hat r)\Rightarrow \hat r,
\qquad L_m(\E)=\hat r\sqrt{2(\Psi_0(\hat r)-\E)},
\ee
where $\Psi_0(r)$ corresponds to $\rho_0(r)$.

\begin{figure}
\begin{center}
\includegraphics[width=6.3cm]{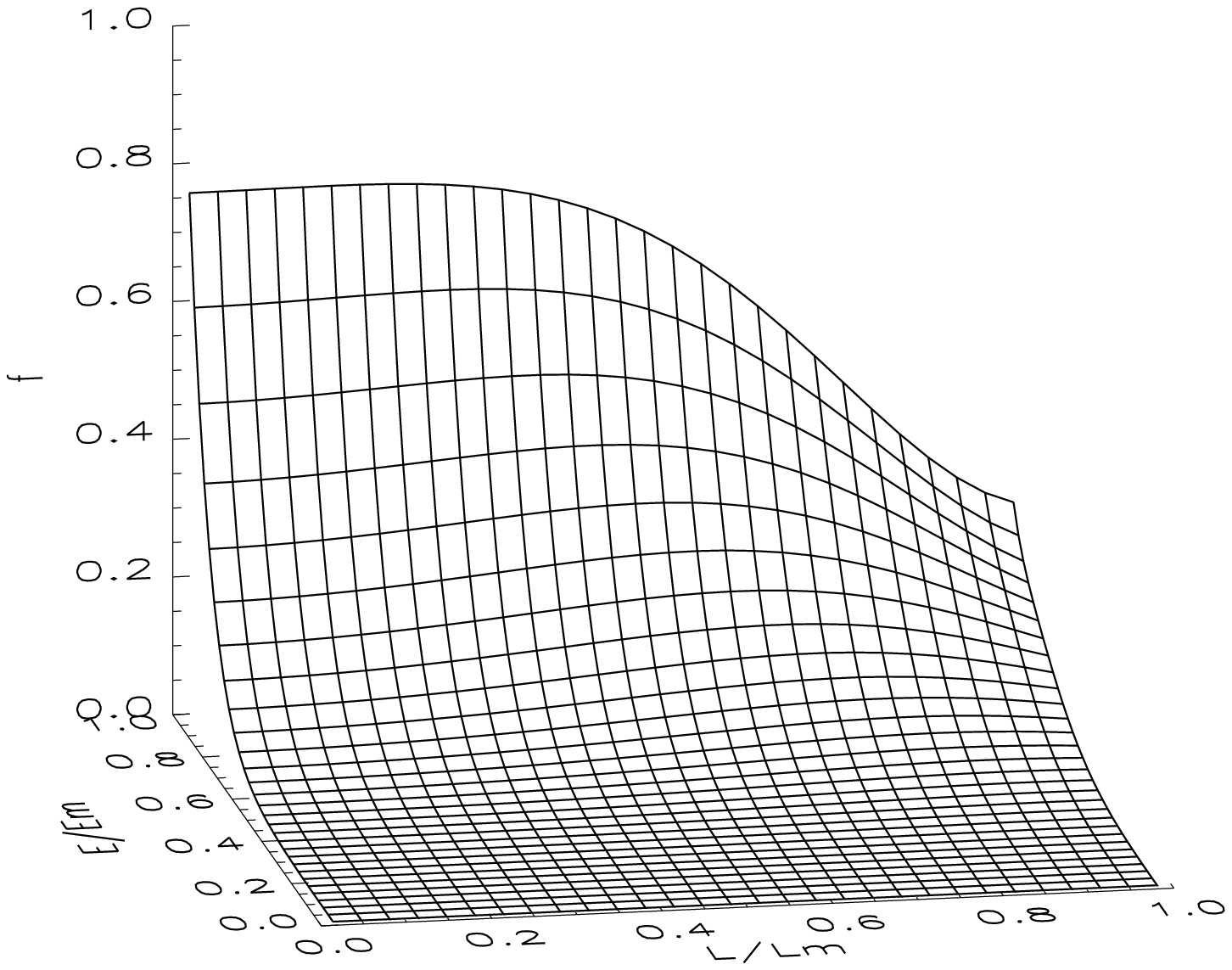}\includegraphics[width=6.3cm]{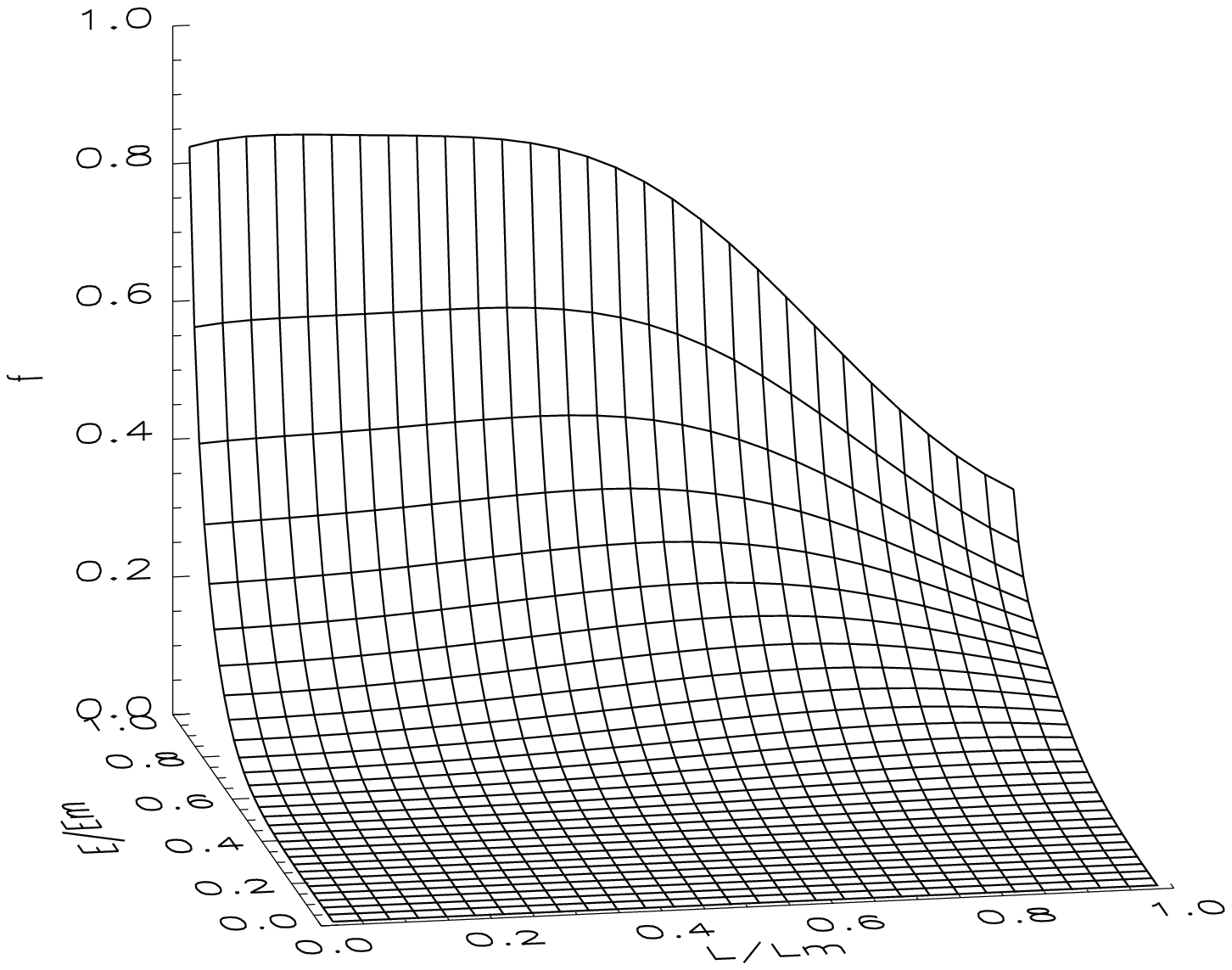}
\caption{(a) A chosen $f(\E,L)$ for the isochrone, and (b) the
best-fit model. The plots are in $(\E /\E_m,L/L_m(\E)$-space.}
\label{ic2}
\end{center}
\end{figure}

Using the isochrone for $f_0(\E)=f_{IC}(\E)$ and setting $G=1$, we have 
\be
L_m^2(\E)_{IC}=Mb\Big(\frac{M}{2\E b}+\frac{2\E b}{M}-2\Big).\label{lm2}
\ee
Fig.\ 2a shows the (suitably normalized) 
$f(\E,L)$ from $f_{IC}(\E)$ (as with the
example of Fig.\ 1) and (\ref{hel}) with an amplitude $d=1/2$,
plotted in the $(\E/\E_m,L/L_m(\E))$-space for convenience. The plotted
surface corresponds to $[0.05,0.95]\times[0.05,0.95]$ so that the behaviour 
of $f(\E,L)$ can well be seen without the shrinking $L$-ranges due to 
(\ref{lm2}). The inverse solution, based on 4000 bins
distributed in $(r,v)$-space in
the same manner as in the example of Fig.\ 1, and with the same computational
procedure for solving the inverse problem, is shown in Fig.\ 2b.
The $(r,v)$-space is here given by the dimensions $(r,v_r,v_\phi)$, 
where $v_r=dr/dt$ is the radial velocity and the tangential velocity is
$v_\phi=\sqrt{u^2-v_r^2}$, i.e., $L=rv_\phi$.

With the dark-matter scenario, we have no self-consistency regularization 
(and $m=1$ in the $f(\E,L)$-version of (\ref{chisq})). 
The model for $f(\E,L)$ was taken to be
\be
f(\E,L)=\exp\big(\sum_{ij} b_{ij}\,\E^iL^j \Big),\label{felmodel}
\ee
with the largest degree of 5 for both $\E^i$ and $L^j$. The two-dimensional
polynomial series may not be the best general representation for
$f(\E,L)$ (cf.\ the intrinsic limitations mentioned with Fig.\ 1), 
but in this case it at least appears to lead essentially 
to as good a convergence 
and solution as in the one-dimensional polynomial case.

\section{Distribution function as probability distribution}

To be able to solve the inverse problem in terms of probability distributions
and without binned sampling,
we need to define how to compare a model distribution and observations, i.e.,
a sampled realization of some probability distribution. Direct comparison
methods of distributions are based on cumulative distributions as the latter
are well-defined non-binned observables. However,
cumulative distributions are uniquely defined only in one dimension
though there have been some attempts at two and three dimensions; 
see \cite{press} and references therein. This dilemma is solved
by noting that the one-dimensional marginal distributions 
of our $f(x,v)$ exactly correspond
to the usual line projections in the 
projection-slice theorem, Radon transform 
and related tomographic analysis; see, e.g., \cite{brace,kaipio} and references therein.

In general terms, let $f(x)$, $x\in\R^N$, be a probability distribution,
and let $\sf R$, an $N\times N$ orthogonal rotation matrix, ${\rm det}\,{\sf R}=1$, 
describe a linear 
transformation to a new basis in $\R^N$: $w={\sf R}x$. 
The marginal distribution
function $h(z)$ along $z$, any one of the new coordinate axes denoted
by the set $W({\sf R})$, is
\be
h(z)=\int_{W({\sf R})\setminus z} f({\sf R}^{-1}w)\,d^{N-1}w,
\ee
and its cumulative distribution function $C(z)$ is
\be
C(z)=\int_{z_{\rm min}}^z h(z')\,dz',
\ee
usually with $z_{\rm min}\rightarrow -\infty$.
Since $h(z)=dC(z)/dz$ uniquely defines $h(z)$ from a given $C(z)$, 
we know from tomographic theory (hence
this inverse problem can be called dynamical tomography) that:

{\em The probability
distribution $f(x)$, $x\in\R^N$, is uniquely determined by the
cumulative distributions $C(z)$ of its marginal distributions
$h(z)$ along all line directions in $\R^N$ that define the coordinate axis directions of $z$.}
For example, in $\R^2$ the lines are defined by all possible values of the
rotation angle
$\theta\in S^1$, while in $\R^3$ they are given by all the directions
$\omega\in S^2$.

Two one-dimensional distributions can be compared via their cumulative
distribution functions. In the case of observations and a model,
we denote the observational distribution of
$K$ observations at $z_i$ (arranged in ascending order by $z$) by
\be
S_K(z)=i, \quad z_i\le z<z_{i+1},
\ee
for number density, or, if mass is included in our problem,
\be
S_K(z)=\sum_{j=1}^i m_j,\quad z_i\le z<z_{i+1},
\ee
with $S_K(z)=0$ for $z<z_1$.
Now we are using cumulative distributions $S_K$ as our observables. 
The comparison between $S_K(z)$ and a model $C(z)$ can be done with a number 
of norms; with the usual $L_2$ we have $\chi^2$-comparison. Another
choice often used is the $L_\infty$-norm giving simply
\bee
\mathfrak{D}=\max_{-\infty<z<\infty}\vert S_K(z)-C(z)\vert,
\eee
with which one can use the Kolmogorov-Smirnov (K-S)  
probability $0\le P\le 1$
of matching marginal distributions \cite{mises,press}, when
both $S_K(z)$ and $C(z)$ are normalized to unity by
\bee
S_K(z)\rightarrow\frac{S_K(z)}{S_K(z_K)},\quad 
C(z)\rightarrow\frac{C(z)}{C(z_{\rm max})},
\eee
usually with $z_{\rm max}\rightarrow\infty$,
and one normalizes the obtained $f$ afterwards such that $\int f\,d^N x=K$.
This, however, distorts the comparison 
as the high-$z$ ends are now always matched at unity at the comparison stage.
Also, K-S probability $P$ is defined by
a series with each term a power of
\bee
d_{KS}=\exp(-K \mathfrak{D}^2),
\eee
which means that, for a large number of observations $K$, the model $C(z)$
must be very close to the observed $S_K(z)$ ($K\mathfrak{D}^2<10)$ for the formal
probability $P(d_{KS})$ to have any computationally useful level above zero. 
Even if the observations were perfect, the model cannot reach such a fit,
so usually $P(d_{KS})\rightarrow 0$ regardless of the model.

It is thus practicable to use the $\chi^2$ from the comparison pairs
at each $z_i$; if $K$
is very large, the comparison can be done for a smaller set of points in
each $z$ via pruning or interpolation as the curves of $S_K(z)$ are now so smooth
that this loses virtually no information. Thus we can measure the 
$\chi^2$-sum of marginal distribution comparisons at each
chosen coordinate line direction $i$ and its points $z_j$:
\be
\chi^2=\sum_{ij} [S_K^{(i)}(z_j)-C^{(i)}(z_j)]^2.\label{probchi}
\ee

In principle, we could use an arbitrarily large number of projection 
lines $z^{(i)}$ as, in contrast to standard tomography,
there are no limiting factors for choosing them (except for computation 
time). However, we can expect that even a very
limited number of projection lines is compensated
for by the strong a priori information 
(or assumptions) on $f(x,v)$. As is known
from, e.g., limited-angle tomography, incorporating
even simple a priori information 
is a powerful means of enhancing the solution of the inverse problem
\cite{kaipio}.

We can, for example, expect that simple rotations in each of the
15 two-dimensional coordinate planes in $\R_x^3\times\R_v^3$ are sufficient for
forming a set of projection lines. Now two rows of the 6$\times$6 coordinate 
transformation matrix $\sf R$, defining two choices for $z$, correspond 
to the rotation through 
some angle $\theta$, while for other rows $\sf R$ is an identity matrix.
We evaluate each
$h_i(z)$ corresponding to the choice of $z$ ordered by $i$ using $w={\sf R}_i (x,v)^T$: 
\be
h_i(z)=\int \gamma({\sf R}_i^{-1}w)\, f({\sf R}_i^{-1}w) \,d^5w,
\ee
and find, via $S_K^{(i)}(z)$ and $C^{(i)}(z)$, 
the best model parameters of $\Phi(x)$, $f[I_\Phi(x,v)]$, and
$\gamma(x)$ minimizing (\ref{probchi}). 

We can use the $\chi^2$-formalism to examine different parts of
$\R_x^3\times\R_v^3$ with different weights, using
$[z_{\rm min},z_{\rm max}]$ other than $]-\infty,\infty[$. 
For example, we can zoom in
on parts far away from the centre and fit their cumulative distributions
in separate $\chi^2$-terms -- 
otherwise their effect on $\chi^2$ would be negligible
compare to centre parts. Also, with marginal distributions we
can even use data lacking some coordinates or having poor accuracy in 
them. For example, if some observations have only recorded $x$ without
$v$, we can still use them in marginal distributions in $x$.

While a one-dimensional probability distribution is easy to sample
using the inverse function of its cumulative distribution, creating 
a corresponding multidimensional
coordinate transform $x\leftrightarrow y$ , $x,y\in\R^N$,
such that uniform random sampling in $y$ would automatically 
create a correct distribution in $x$ is difficult and not always realizable. 
Thus we have to resort to a selective Monte Carlo sampling algorithm in
simulations:

\smallskip
{\bf Sprinkler algorithm for Monte Carlo sampling an 
$N$-dimensional distribution $f(x)$, $f:\R^N\rightarrow\R$.}

\begin{enumerate}
\item Draw a random point $x\in\Omega$ (uniform distribution
in $\R^N$), where $\Omega$
is the desired region $\Omega\subset\R^N$.
\item Evaluate the probability
\bee
p(x)=\int_{\omega(x)} f(x') d^N x',
\eee
where $\omega(x)\subset\Omega$ is a finite but small integration region
around $x$, similar (with the same volume) for each drawn $x$.
\item Draw a random number $0\le q\le 1$ (uniform distribution
for $q\in\R$). The point
$x$ is included in the sample set $\mathcal S$ 
if $q\le p(x)/p_m$, where $p_m=\max p(x)$, $x\in\Omega$.
\item Return to 1 until $\mathcal S$ is 
large enough to approximate $f(x)$.
\end{enumerate}

As $N$ increases, the ratio between included and drawn $x$ decreases for peaked distributions, so for $N>3$ this method
gets slower than for low dimensions; in such cases faster Monte Carlo sampling can be carried out
with Markov chain based MCMC tehniques \cite{kaipio}.

\begin{figure}
\begin{center}
\includegraphics[width=6.3cm]{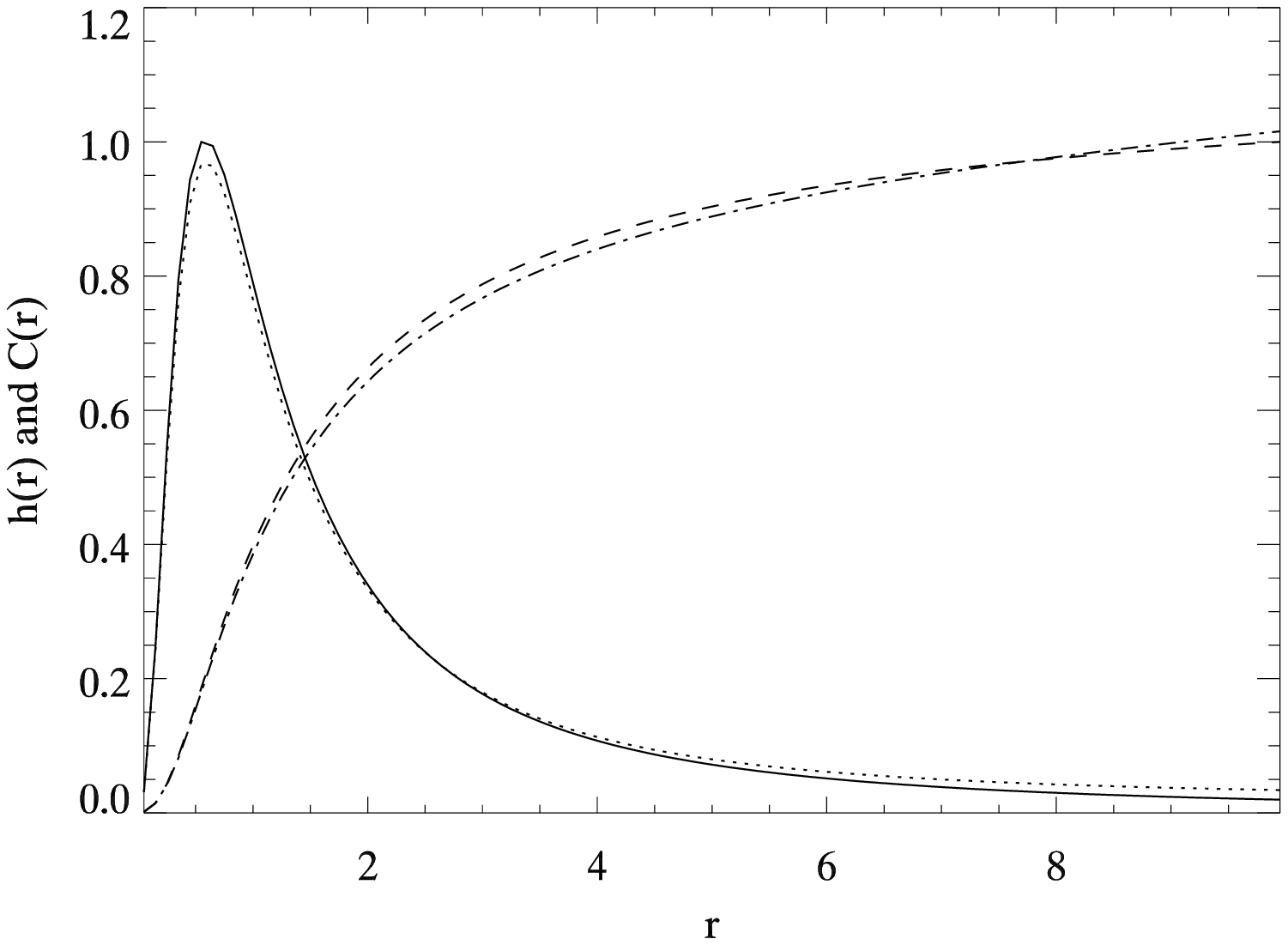}\includegraphics[width=6.3cm]{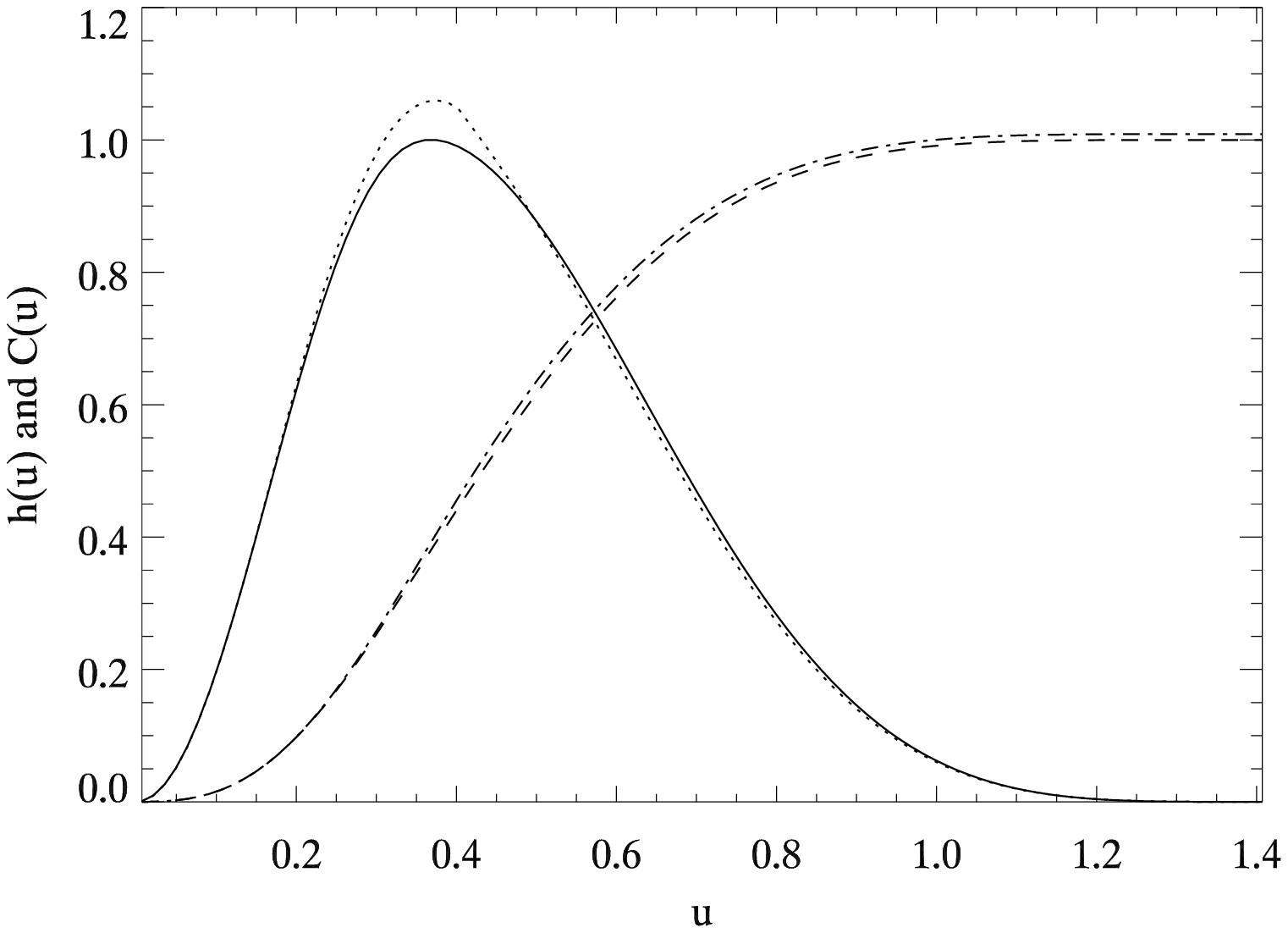}
\caption{Marginal and cumulative distributions in (a) $r$ and (b) $u$ 
from the isochrone distribution ($h$ in solid line, $S_K$ from samples in
dashed line), and the distributions with the best-fit model ($h$ in dotted
line, $C$ in dot-dash).}
\label{cumu}
\end{center}
\end{figure}

As an example, we created $10^5$ random samples of the 
$f_{IC}(\E)$ of the previous examples
in the $(r,u)$-plane in the above manner.
The distribution function $\hat f(r,u)$ in the $(r,u)$-plane is given by
\be
\hat f(r,u)=16\pi^2 r^2\,u^2\,f[\mathcal{E}(r,u)],
\ee
and its marginal distributions in $r$ and $u$ are shown as solid lines
in Fig.\ 3 a and b, suitably normalized to scale between 0 and 1.
The cumulative distributions $S_K$ in $r$ and $u$ 
from the random samples are shown as dashed lines (again suitably normalized).
The best-fit model solution $C$ to (\ref{probchi}), with $\Psi(r)$ and
$f(\E)$ of (\ref{modfuncs}) as earlier, matches these well (dot-dash);
its marginal distributions are shown as dotted lines. The solution is
virtually the same as in Fig.\ 1, so the tomography of the 
probability distribution is a practical approach. In particular, just
the two distributions $S_K(r)$ and $S_K(u)$ were already sufficient for
obtaining the solution: due to the highly restricted form of $f(x,v)$, no
more projection lines were needed.

\section{Conclusions and discussion}

We have defined the inverse problem of dynamical tomography and shown
that, with suitable assumptions and mathematical
tools, the problem is well-posed and solvable.
The uniqueness theorems are central to dynamical tomography: they demonstrate
that the steady-state assumption allows a unique solution even with 
fragmentary data, unobservable mass, and observational bias functions. 

Another important concept is the possibility to approximate general
(or at least near-integrable) systems with integrable ones.
Near-integrable systems that are not very old, but old enough to
have settled to a quasistable state, appear closer to integrable
than old ones as phase-space diffusion (Arnold diffusion) is not extensive 
yet. As Nekhoroshev's theorem as well as semianalytical approximations
and numerical estimates show \cite{licht},
the time scale for such diffusion grows fast in inverse proportion to
the distance of the initial phase-space point from a KAM torus.
For example, in \cite{kaaspre} it was shown that even in chaotic
zones it is possible to construct approximate tori such that
orbits of the system resemble perturbed motion on these tori 
for small enough time intervals.

Torus construction \cite{kaasprl}
is a well-defined concept for near-integrable potentials \cite{poschel}: 
if a Hamiltonian system is 
near-integrable in the sense of the KAM theorem, it is integrable on a Cantor
set consisting of the surviving KAM tori of the system, i.e., it is possible
to construct tori such that their defining action integrals indeed 
parametrize a global, ordered set. 
Via torus construction, we find a potential $\Phi(x)$
for which we can create an approximate set
of tori (and determine expressions for their action integrals $J$
and thus for distribution functions $f[J(\Phi(x);x,v)]$)
defining an integrable system such that the steady-state
integrability assumption used here agrees with the observations as well
as possible. Note that $\Phi(x)$ itself does not have to be integrable or even 
near-integrable; the torus set constructed defines an integrable
Hamiltonian that is not usually derivable from a potential, so we never
get an approximate integrable potential in the first place. This gives
the derived $\Phi(x)$ some additional flexibility; indeed, any integrable
potential approximating a real galaxy is probably not a very good
representation. The key principle allowing the flexibility is the same
as in the uniqueness theorems: we look for isosurfaces in phase space
best explaining the observations. The goodness of our solution $\Phi(x)$
is not directly measured by its closeness to an integrable system. 

Our $\Phi(x)$ should thus be able to mimic the real potential quite well
in cases of near-integrable or even somewhat chaotic but not very old 
systems. The feasible potentials $\Phi(x)$ that can reproduce, via the 
torus-construction principle, the approximate
isosurface structure of the observed matter distribution can all be expected
to be close to each other. In fact, even if $\Phi(x)$ were integrable, we 
would have to use torus construction to find it even if we started with
an integrable initial $\Phi_0(x)$, since the iteration procedure
(or equivalent) for fitting a model to observations will generally
explore non-integrable potentials $\Phi(x)$.

Let us denote by $H_0$ the integrable Hamiltonian corresponding to the
tori constructed for $\Phi(x)$, and by $H$ the Hamiltonian of $\Phi(x)$.
For optimal $H_0$, the difference (in some chosen norm)
\bee
\Vert \delta H\Vert
=\Vert H-H_0\Vert
\eee
is as small as possible over the whole phase space; 
we call the corresponding tori
the optimal tori of $\Phi$, and $H_0$ the optimal Hamiltonian of $\Phi$.
If $\Phi$ is integrable, then its optimal tori are its invariant tori: 
$\delta H$ vanishes everywhere. For our purposes, 
$\Vert\delta H\Vert$ 
does not have to be small (although the smaller it is, the better). 

We surmise that: 
\begin{enumerate}[i)]
\item The optimal tori  
constructed for $\Phi(x)$ form a map
$\Phi\rightarrow H_0$. With the same construction scheme, a potential
$\Phi(x)+\epsilon\phi(x)$ yields an optimal Hamiltonian $H_0+\epsilon h_0$.
\item The isosurfaces of the distribution function of the system can 
be approximated by constructing them from a set of 3-tori describable 
by an integrable Hamiltonian $H_0$ (but not necesssarily by an integrable 
potential).
\end{enumerate}

Then our $\Phi$ can be expected to be a good approximation of the 
real potential of the system (as a regularizing constraint, we can
choose, e.g., the smallness of $\Vert \delta H\Vert$). 

The possibility of constructing optimal tori and Hamiltonians holds another 
great advantage: a multitude of deviations from integrability can
readily be modelled with a suitably tailored version of 
Hamiltonian perturbation theory \cite{kaasham}. This includes
both structural detail (resonant orbit families and other details)
and timelike irregularities (deviations from steady state).

In a forthcoming study, we will study
more general and realistic systems such as axisymmetric and fully 
three-dimensional St\"ackel potentials. We will also introduce more
general observational biases and other factors, and investigate their
influence. The final goal is the combination of a general torus-construction
machinery and an analysis procedure for dynamical tomography, so that we
can work with any potentials. With such tools, we can analyze the data from
large-scale surveys and construct a consistent mathematical model of the
dynamics of our galaxy.  The real galactic problem can be expected to suffer
from considerable model noise, so we should employ various forms of
modelling the distribution functions and potential. In addition to mapping
the density distribution of the dark matter in our galaxy, we should also be able
to test whether distributions with alternative theories of gravity 
are distinguishable from standard models with dark matter.

\section*{Acknowledgements}

It is a pleasure to thank Teemu Laakso, Emiliano de Simone, 
Lassi P\"aiv\"arinta, Jari Kaipio, and Antti Kupiainen for discussions and comments.

\medskip
Received April 2008; revised August 2008.

\medskip
{\it E-mail address:} Mikko.Kaasalainen[at]helsinki.fi

\end{document}